\documentclass[11pt]{llncs}
\usepackage{amsmath}
\usepackage{amssymb}
\usepackage{amsfonts}
\usepackage{epic,gastex}
\usepackage{eucal}
\usepackage{array}
\usepackage{url}

\DeclareMathOperator{\dt}{.}

\DeclareMathOperator{\spn}{\mathrm{span}}
\DeclareMathOperator{\excl}{\mathrm{excl}}
\DeclareMathOperator{\dupl}{\mathrm{dupl}}

\newcommand{\sa}{synchronizing automata}
\newcommand{\san}{synchronizing automaton}
\newcommand{\cra}{completely reachable automata}
\newcommand{\cran}{completely reachable automaton}
\newcommand{\sw}{reset word}

\newcommand{\rl}{reset threshold}
\newcommand{\rt}{reset threshold}
\newcommand{\scc}{strongly connected component}
\newcommand{\scn}{strongly connected}

\DeclareSymbolFont{rsfscript}{OMS}{rsfs}{m}{n}
\DeclareSymbolFontAlphabet{\mathrsfs}{rsfscript}

\begin{document}
\title{Completely Reachable Automata\thanks{Supported by the Russian Foundation for Basic Research, grant no.\ 16-01-00795, the Ministry of Education and Science of the Russian Federation, project no.\ 1.1999.2014/K, and the Competitiveness Program of Ural Federal University. The paper was written during the second author's stay at Hunter College of the City University of New York as Ada Peluso Visiting Professor of Mathematics and Statistics with a generous support from the Ada Peluso Endowment.}}

\titlerunning{Completely Reachable Automata}

\author{E. A. Bondar \and M. V. Volkov}

\authorrunning{E. A. Bondar, M. V. Volkov}

\tocauthor{E. A. Bondar, M. V. Volkov (Ekaterinburg, Russia)}

\institute{Institute of Mathematics and Computer Science\\ Ural Federal University, Lenina 51, 620000 Ekaterinburg, Russia\\
\email{bondareug@gmail.com, mikhail.volkov@usu.ru}}

\maketitle

\begin{abstract}
We present a few results and several open problems
concerning complete deterministic finite automata
in which every non-empty subset of the state set
occurs as the image of the whole state set under
the action of a suitable input word.

\keywords{Deterministic finite automaton, Complete reachability, Transition monoid, Syntactic complexity, PSPACE-completeness}
\end{abstract}

\section{Background and overview}
\label{sec:intro}

We consider the most classical species of finite automata, namely, complete deterministic automata. Recall that a \emph{complete deterministic finite automaton} (DFA) is a triple
$\mathrsfs{A}=\langle Q,\Sigma,\delta\rangle$, where $Q$ and $\Sigma$ are finite sets called the \emph{state set} and the \emph{input alphabet} respectively, and $\delta\colon Q\times\Sigma\to Q$ is a totally defined map called the \emph{transition function}. Let $\Sigma^*$ stand for the collection of all finite words over the alphabet $\Sigma$, including the empty word. The function $\delta$ extends to a function $Q\times\Sigma^*\to Q$ (still denoted by $\delta$) in the following natural way: for every $q\in Q$ and $w\in\Sigma^*$, we set $\delta(q,w):=q$ if $w$ is empty and $\delta(q,w):=\delta(\delta(q,v),a)$ if $w=va$ for some word $v\in\Sigma^*$ and some letter $a\in\Sigma$. Thus, via $\delta$, every word $w\in\Sigma^*$ induces a transformation of the set $Q$.

Let $\mathcal{P}(Q)$ stand for the set of all non-empty subsets of the set $Q$. The function $\delta$ can be further extended to a function $\mathcal{P}(Q)\times\Sigma^*\to\mathcal{P}(Q)$ (again denoted by $\delta$) by letting $\delta(P,w):=\{\delta(q,w)\mid q\in P\}$ for every non-empty subset $P\subseteq Q$.
Thus, the triple $\mathcal{P}(\mathrsfs{A}):=\langle\mathcal{P}(Q),\Sigma,\delta\rangle$ is a DFA again; this DFA is referred to as the \emph{powerset automaton} of $\mathrsfs{A}$.

Whenever we deal with a fixed DFA, we simplify our notation by suppressing the sign of the transition function; this means that we may introduce the DFA as the pair $\langle Q,\Sigma\rangle$ rather than the triple $\langle Q,\Sigma,\delta\rangle$ and may write $q\dt w$ for $\delta(q,w)$ and $P\dt w$ for $\delta(P,w)$.

Given a DFA $\mathrsfs{A}=\langle Q,\Sigma\rangle$, we say that a non-empty subset $P\subseteq Q$ is \emph{reachable} in $\mathrsfs{A}$ if $P=Q\dt w$ for some word $w\in\Sigma^*$. A DFA is called \emph{completely reachable} if every non-empty subset of its state set is reachable.

Let us start with an example that served as a first spark which ignited our interest in \cra. A DFA $\mathrsfs{A}=\langle Q,\Sigma\rangle$ is called \emph{synchronizing} if it has a reachable singleton, that is, $Q\dt w$ is a singleton for some word $w\in\Sigma^*$. Any such word $w$ is said to be a \emph{reset word} for the DFA. The minimum length of reset words for $\mathrsfs{A}$ is called the \emph{\rl} of $\mathrsfs{A}$. In~1964 \v{C}ern\'{y}~\cite{Ce64} constructed for each $n>1$ a \san\ $\mathrsfs{C}_n$ with $n$ states, 2 input letters, and \rl\ $(n-1)^2$. Recall the definition of $\mathrsfs{C}_n$. If we denote the states of $\mathrsfs{C}_n$ by $1,2,\dots,n$ and the input letters by $a$ and $b$, the actions of the letters are as follows:
\[
i\dt a:=\begin{cases}
i &\text{if } i<n,\\
1 &\text{if } i=n;
\end{cases}
\qquad
i\dt b:=\begin{cases}
i+1 &\text{if } i<n,\\
1 &\text{if } i=n.
\end{cases}
\]
The automaton $\mathrsfs{C}_n$ is shown in Fig.\,\ref{fig:cerny-n}.
\begin{figure}[ht]
\begin{center}
\unitlength .55mm
\begin{picture}(72,85)(0,-83)
\gasset{Nw=16,Nh=16,Nmr=8}
\node(n0)(36.0,-16.0){1}
\node(n1)(4.0,-40.0){$n$} \node(n2)(68.0,-40.0){2}
\node(n3)(16.0,-72.0){$n{-}1$} \node(n4)(56.0,-72.0){3}
\drawedge[ELdist=2.0](n1,n0){$a,b$} \drawedge[ELdist=1.5](n2,n4){$b$}
\drawedge[ELdist=1.7](n0,n2){$b$}
\drawedge[ELdist=1.7](n3,n1){$b$}
\drawloop[ELdist=1.5,loopangle=30](n2){$a$}
\drawloop[ELdist=2.4,loopangle=-30](n4){$a$}
\drawloop[ELdist=1.5](n0){$a$}
\drawloop[ELdist=1.5,loopangle=210](n3){$a$}
\put(31,-73){$\dots$}
\end{picture}
\caption{The automaton $\mathrsfs{C}_n$}\label{fig:cerny-n}
\end{center}
\end{figure}

The automata in the \v{C}ern\'{y} series are well-known in the connection with the famous \v{C}ern\'{y} conjecture about the maximum \rl\ for \sa\ with $n$ states, see \cite{Vo08}. The automata $\mathrsfs{C}_n$ provide the lower bound $(n-1)^2$ for this maximum, and the conjecture claims that these automata represent the worst possible case since it has been conjectured that every \san\ with $n$ states can be reset by a word of length $(n-1)^2$. The automata $\mathrsfs{C}_n$ also have other interesting properties, including the one registered here:
\begin{example}
\label{examp:cerny}
Each automaton $\mathrsfs{C}_n$, $n>1$, is completely reachable.
\end{example}

The result of Example~\ref{examp:cerny} was first observed by Maslennikova~\cite[Proposition~2]{Maslennikova:2012}, see also \cite{Maslennikova:2014}, in the course of her study of the so-called reset complexity of regular ideal languages. Later, Don~\cite[Theorem~1]{Don:2015} found a sufficient condition for complete reachability that applies to the automata $\mathrsfs{C}_n$. In Section~\ref{sec:sufficient} we present another sufficient condition that both simplifies and generalizes Don's one. We provide an example showing that our condition is not necessary but we conjecture that it may be necessary for a stronger version of complete reachability.

In Section~\ref{sec:complexity} we discuss the problem of recognizing \cra. We show PSPACE-com\-plete\-ness of the following decision problem: given a DFA $\mathrsfs{A}=\langle Q,\Sigma\rangle$ and a subset $P\subseteq Q$, decide whether or not $P$ is reachable in $\mathrsfs{A}$. We also outline a polynomial algorithm that recognizes \cra\ with 2 input letters modulo the conjecture from Section~\ref{sec:sufficient}.

Given a DFA $\mathrsfs{A}=\langle Q,\Sigma\rangle$, its \emph{transition monoid} $M(\mathrsfs{A})$ is the monoid of all transformations of the set $Q$ induced by the words in $\Sigma^*$. By the \emph{syntactic complexity} of $\mathrsfs{A}$ we mean the size of $M(\mathrsfs{A})$. Clearly, the syntactic complexity of a \cran\ $\mathrsfs{A}$ with $n$ states cannot be less than $2^n-1$ since, for each non-empty subset $P$ of the state set, the transition monoid of $\mathrsfs{A}$ must contain a transformation whose image is $P$. In Section~\ref{sec:minimal} we address the question of the existence and classification of \emph{minimal} \cra, i.e., \cra\ with minimum possible syntactic complexity. This question has been recently investigated in the realm of transformation monoids by the first author~\cite{Bondar:2014,Bondar:2016}; here we translate her results into the language of automata theory and augment them by determining the input alphabet size of minimal \cra.

The present paper is in fact a work-in-progress report, and therefore, each of Sections~\ref{sec:sufficient}--\ref{sec:minimal} includes some open questions. Several additional open questions form Section~\ref{sec:final}; they mostly deal with synchronization properties of \cra.

We assume the reader's acquaintance with some basic concepts of graph theory, monoid theory, and computational complexity.

\section{A Sufficient Condition}
\label{sec:sufficient}

If $Q$ is a finite set, we denote by $T(Q)$ the \emph{full transformation monoid} on $Q$, i.e., the monoid consisting of all transformations $\varphi\colon Q\to Q$. For $\varphi\in T(Q)$, its \emph{defect} is defined as the size of the set $Q\setminus Q\varphi$. Observe that the defect of a product of transformations is greater than or equal to the defect of any of the factors and is equal to the defect of a factor whenever the other factors are permutations of $Q$. In particular, if a product of transformations has defect~1, then one of the factors must have defect~1.

Let $\mathrsfs{A}=\langle Q,\Sigma\rangle$ be a DFA. The defect of a word $w\in\Sigma^*$ with respect to $\mathrsfs{A}$ is the defect of transformation induced by $w$. Consider a word $w$ of defect~1. For such a word, the set $Q\setminus Q\dt w$ consists of a unique state, which is called the \emph{excluded state} for $w$ and is denoted by $\excl(w)$. Further, the set $Q\dt w$ contains a unique state $p$ such that $p=q_1\dt w=q_2\dt w$ for some $q_1\ne q_2$; this state $p$ is called the \emph{duplicate state} for $w$ and is denoted by $\dupl(w)$. Let $D_1(\mathrsfs{A})$ stand for the set of all words of defect~1 with respect to $\mathrsfs{A}$, and let $\Gamma_1(\mathrsfs{A})$ denote the directed graph having $Q$ as the vertex set and the set
\[
E_1:=\{(\excl(w),\dupl(w))\mid w\in D_1(\mathrsfs{A})\}
\]
as the edge set. Since we consider only directed graphs in this paper, we call them just graphs in the sequel. Recall that a graph is \emph{strongly connected} if for every pair of its vertices, there exists a directed path from the first vertex to the second.

\begin{theorem}
\label{thm:sufficient}
If a DFA $\mathrsfs{A}=\langle Q,\Sigma\rangle$ is such that the graph $\Gamma_1(\mathrsfs{A})$ is \scn, then $\mathrsfs{A}$ is completely reachable.
\end{theorem}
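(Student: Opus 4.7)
The plan is a downward induction on $k=|P|$, starting from $k=|Q|$. The base case is immediate since $Q=Q\dt\varepsilon$ is reachable. For the inductive step, I would assume every subset of $Q$ of size strictly greater than $k$ is reachable, and try to show that an arbitrary $P\subseteq Q$ with $1\le k=|P|<|Q|$ is reachable by exhibiting a reachable preimage $P'$ with $|P'|=k+1$ together with a word $w\in D_1(\mathrsfs{A})$ such that $P'\dt w=P$.

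The key construction is the following. Suppose $w\in D_1(\mathrsfs{A})$ satisfies $\excl(w)\notin P$ and $\dupl(w)\in P$, and write $d=\dupl(w)$ and let $q_1\ne q_2$ be the two preimages of $d$ under $w$. Since $Q\dt w=Q\setminus\{\excl(w)\}$, every $p\in P\setminus\{d\}$ has a unique preimage $p^{\leftarrow}\in Q$ under $w$, and $p^{\leftarrow}\notin\{q_1,q_2\}$ because $p\ne d$. Setting
\[
P':=\{q_1,q_2\}\cup\{p^{\leftarrow}\mid p\in P\setminus\{d\}\},
\]
a short size check gives $|P'|=k+1$ and $P'\dt w=P$. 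By the induction hypothesis $P'=Q\dt u$ for some $u\in\Sigma^*$, so $Q\dt uw=P$ and $P$ is reachable.

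It remains to produce such a word $w$, and this is exactly where strong connectivity of $\Gamma_1(\mathrsfs{A})$ is used. Since $P$ is a proper nonempty subset of $Q$, both $P$ and $Q\setminus P$ are nonempty; pick $e_0\in Q\setminus P$ and $d_0\in P$, take a directed path from $e_0$ to $d_0$ in $\Gamma_1(\mathrsfs{A})$, and look at the first edge on this path that leaves $Q\setminus P$. This edge has the form $(e,d)$ with $e\in Q\setminus P$ and $d\in P$, and by the definition of $\Gamma_1(\mathrsfs{A})$ it arises from some $w\in D_1(\mathrsfs{A})$ with $\excl(w)=e$ and $\dupl(w)=d$, which meets the requirements above.

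The conceptually nontrivial part is the preimage construction and the verification that it delivers a set of the correct cardinality whose image under $w$ is exactly $P$; once that bookkeeping is right, the argument reduces to the graph-theoretic observation that a strongly connected graph has an edge crossing every nontrivial cut, and the induction goes through cleanly. No extra hypotheses on $\mathrsfs{A}$ (such as synchronization or connectedness of its underlying digraph) seem to be needed beyond those packaged into the strong connectivity of $\Gamma_1(\mathrsfs{A})$.
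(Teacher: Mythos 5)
Your proposal is correct and follows essentially the same route as the paper's proof: a cut-crossing edge of $\Gamma_1(\mathrsfs{A})$ supplies a defect-$1$ word $w$ with $\excl(w)\notin P$ and $\dupl(w)\in P$, the explicit preimage set of size $|P|+1$ is built exactly as in the paper, and the induction (yours downward on $|P|$, the paper's upward on $|Q\setminus P|$) is the same argument in different clothing.
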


\begin{proof}
Take an arbitrary non-empty subset $P\subseteq Q$. We prove that $P$ is reachable in $\mathrsfs{A}$ by induction on $k:=|Q\setminus P|$. If $k=0$,
then $P=Q$ and nothing is to prove as $Q$ is reachable via the empty word. Now let $k>0$ so that $P$ is a proper subset of $Q$. Since the graph $\Gamma_1(\mathrsfs{A})$ is \scn, there exists an edge $(q,p)\in E_1$ that connects $Q\setminus P$ and $P$ in the sense that $q\in Q\setminus P$ while $p\in P$. By the definition of $E_1$, there exists a word $w$ of defect~1 with respect to $\mathrsfs{A}$ for which $q$ is the excluded state and $p$ is the duplicate state. By the definition of the duplicate state, $p=q_1\dt w=q_2\dt w$ for some $q_1\ne q_2$, and since the excluded state $q$ for $w$ does not belong to $P$, for each state $r\in P\setminus\{p\}$, there exists a unique state $r'\in Q$ such that $r=r'\dt w$. Now letting $R:=\{q_1,q_2\}\cup\bigl\{r'\mid r\in P\setminus\{p\}\bigr\}$, we conclude that $P=R\dt w$ and $|R|=|P|+1$. Then $|Q\setminus R|=k-1$, and the induction assumption applies to the subset $R$ whence $R=Q\dt v$ for some word $v\in\Sigma^*$. Then $P=Q\dt vw$ so that $P$ is reachable as required.
\end{proof}

Don~\cite{Don:2015} has formulated a sufficient condition for complete reachability in the terms of what he called a state map. Consider a DFA $\mathrsfs{A}=\langle Q,\Sigma\rangle$ with $n$ states in which every subset of size $n-1$ is reachable. Let $W$ be a set of $n$ words of defect~1 with respect to $\mathrsfs{A}$ such that for every subset $P\subset Q$ with $|P|=n-1$ there is a unique word $w\in W$ with $P=Q\dt w$. (Such a set is termed a 1-\emph{contracting collection} in~\cite{Don:2015}). The \emph{state map} $\sigma_W\colon Q\to Q$ induced by $W$ is defined by
\[
q\sigma_W:=\dupl(w)\ \text{ for $w\in W$ such that }\ q=\excl(w).
\]
The following is one of the main results in~\cite{Don:2015}:
\begin{theorem}
\label{thm:don}
A DFA $\mathrsfs{A}$ is completely reachable if it admits a $1$-con\-tract\-ing collection such that the induced state map is a cyclic permutation of the state set of $\mathrsfs{A}$.
\end{theorem}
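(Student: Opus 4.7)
The plan is to deduce Theorem~\ref{thm:don} as an immediate corollary of Theorem~\ref{thm:sufficient} by showing that the hypothesis forces the graph $\Gamma_1(\mathrsfs{A})$ to be strongly connected. Once strong connectivity is established, complete reachability follows with no further work.

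First I would unpack the 1-contracting hypothesis. Every word $w\in W$ has defect~1, so it contributes an edge $(\excl(w),\dupl(w))$ to $E_1$. The uniqueness built into the definition of a 1-contracting collection means that the assignment $w\mapsto\excl(w)$ is a bijection between $W$ and $Q$: for each $q\in Q$ there is exactly one $w_q\in W$ with $Q\dt w_q=Q\setminus\{q\}$. Therefore the subset of $E_1$ arising from $W$ is precisely
\[
\{(q,q\sigma_W)\mid q\in Q\},
\]
i.e., the functional graph of the state map $\sigma_W$.

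Next I would invoke the assumption that $\sigma_W$ is a cyclic permutation of $Q$. The functional graph of such a permutation is a single Hamiltonian directed cycle through all $n$ vertices of $Q$, and such a cycle is trivially \scn. Since $\Gamma_1(\mathrsfs{A})$ contains this Hamiltonian cycle as a spanning subgraph on the same vertex set $Q$ (possibly with additional edges coming from other defect-1 words outside $W$), it too is strongly connected. Theorem~\ref{thm:sufficient} then yields that $\mathrsfs{A}$ is completely reachable.

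There is no genuine obstacle in this argument; it is essentially a reduction. The only point that deserves explicit verification is the identification of the edges produced by $W$ with the functional graph of $\sigma_W$, and this is immediate from the uniqueness clause in the definition of a 1-contracting collection together with the definition of $\sigma_W$. In particular, the hypothesis of Theorem~\ref{thm:don} could be weakened from ``cyclic permutation'' to ``permutation whose functional graph is strongly connected,'' which still would not exhaust the generality of Theorem~\ref{thm:sufficient}.
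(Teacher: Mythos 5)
Your proposal is correct and matches the paper's own argument exactly: the paper likewise observes that each pair $(q,q\sigma_W)$ is an edge of $E_1$, so a cyclic $\sigma_W$ yields a directed Hamiltonian cycle in $\Gamma_1(\mathrsfs{A})$, making it strongly connected, and Theorem~\ref{thm:sufficient} applies. Your closing remark that ``cyclic permutation'' could be relaxed to any permutation with strongly connected functional graph is also consistent with the paper's discussion of why Theorem~\ref{thm:sufficient} is strictly more general.
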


Even though Theorem~\ref{thm:don} is stated in different terms, it is easily seen to constitute a special case of Theorem~\ref{thm:sufficient}. Indeed, if $W$ is a $1$-contracting collection and $\sigma_W$ is the corresponding state map, then each pair $(q,q\sigma_W)$ can be treated as an edge in $E_1$. Therefore, if $\sigma_W$ is a cyclic permutation of $Q$, then the set of edges $\{(q,q\sigma_W)\mid q\in Q\}$ forms a directed Hamiltonian cycle in the graph $\Gamma_1(\mathrsfs{A})$ whence the latter is \scn.

We believe that Theorem~\ref{thm:sufficient} may have strongly wider application range than Theorem~\ref{thm:don} even though at the moment we do not have any example confirming this conjecture. If the conditions of the two theorems were equivalent, every \scn\ graph of the form $\Gamma_1(\mathrsfs{A})$ would possess a directed Hamiltonian cycle, and this does not seem to be likely.

Now we demonstrate that the condition of Theorem~\ref{thm:sufficient} is not necessary.
\begin{example}
\label{examp:not necessary}
Consider the DFA $\mathrsfs{E}_3$ with the state set $\{1,2,3\}$ and the input letters $a_{[1]},a_{[2]},a_{[3]},a_{[1,2]}$ that act as follows:
\begin{eqnarray*}
i\dt a_{[1]}:=\begin{cases}
2 &\text{if } i=1,2,\\
3 &\text{if } i=3;
\end{cases}
&\qquad&
i\dt a_{[2]}:=\begin{cases}
1 &\text{if } i=1,2,\\
3 &\text{if } i=3;
\end{cases}
\\
i\dt a_{[3]}:=\begin{cases}
1 &\text{if } i=1,2,\\
2 &\text{if } i=3;
\end{cases}
&\qquad&
i\dt a_{[1,2]}:=3\ \text{ for all }\ i=1,2,3.
\end{eqnarray*}
The automaton $\mathrsfs{E}_3$ is shown in Fig.\,\ref{fig:e3} on the left.
\begin{figure}[ht]
\begin{center}
\unitlength=1.05mm
\begin{picture}(80,32)(-7.5,-5)
\node(A1)(0,0){1} \node(B1)(24,0){2} \node(C1)(12,17){3}
\drawedge[ELside=r,curvedepth=-3](A1,B1){$a_{[1]}$}
\drawedge[ELside=r,ELpos=55](B1,A1){$a_{[2]},a_{[3]}$}
\drawedge[ELside=r,ELpos=45](C1,B1){$a_{[3]}$}
\drawedge[ELside=r,curvedepth=-3](B1,C1){$a_{[1,2]}$}
\drawedge[curvedepth=3](A1,C1){$a_{[1,2]}$}
\drawloop[ELdist=.8,loopangle=0](B1){$a_{[1]}$}
\drawloop[ELdist=.8,loopangle=180](A1){$a_{[2]},a_{[3]}$}
\drawloop[ELdist=.8](C1){$a_{[2]},a_{[1]},a_{[1,2]}$}
\node(A2)(55,0){1} \node(B2)(79,0){2} \node(C2)(67,17){3}
\drawedge[ELside=r,curvedepth=-3](A2,B2){}
\drawedge[ELside=r,ELpos=55](B2,A2){}
\drawedge[ELside=r,ELpos=45](C2,A2){}
\end{picture}
\caption{The automaton $\mathrsfs{E}_3$ and the graph $\Gamma_1(\mathrsfs{E}_3)$}\label{fig:e3}
\end{center}
\end{figure}
The graph $\Gamma_1(\mathrsfs{E}_3)$ is shown in Fig.\,\ref{fig:e3} on the right; it is not \scn. However, it can be checked by a straightforward computation that the automaton $\mathrsfs{E}_3$ is completely reachable.
\end{example}


The reason of why the converse of Theorem~\ref{thm:sufficient} fails becomes obvious if one analyzes the above proof. In fact, we have proved more than we have formulated, namely, our proof shows that if a DFA $\mathrsfs{A}$ is such that the graph $\Gamma_1(\mathrsfs{A})$ is \scn, then \emph{every proper non-empty subset of the state set of $\mathrsfs{A}$ is reachable via a product of words of defect}~1. Of course, this stronger property has no reason to hold in an arbitrary \cran. For instance, in the automaton $\mathrsfs{E}_3$ of Example~\ref{examp:not necessary} the singleton $\{3\}$ is not an image of any product of words of defect~1. On the other hand, for the stronger property italicized above, the condition of Theorem~\ref{thm:sufficient} may be not only sufficient but also necessary. We formulate this guess as a conjecture.

\begin{conjecture}
\label{conj:necessary}
If for every proper non-empty subset $P$ of the state set of a DFA $\mathrsfs{A}$ there is a product $w$ of words of defect~$1$ with respect to $\mathrsfs{A}$ such that $P=Q\dt w$, the graph $\Gamma_1(\mathrsfs{A})$ is \scn.
\end{conjecture}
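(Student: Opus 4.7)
I would attack Conjecture~\ref{conj:necessary} by contraposition: assume $\Gamma_1(\mathrsfs{A})$ is not strongly connected and produce a proper non-empty subset $P\subsetneq Q$ which is not of the form $Q\cdot w$ for any product $w$ of defect-$1$ words. Failure of strong connectivity yields a non-empty proper subset $A\subsetneq Q$ that is \emph{closed} in $\Gamma_1$, i.e., no edge of $E_1$ leaves $A$; unpacking the definition of $E_1$, this means that every defect-$1$ word $w\in D_1(\mathrsfs{A})$ with $\excl(w)\in A$ automatically satisfies $\dupl(w)\in A$. The natural candidate for the unreachable subset is $P:=Q\setminus A$.

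The main step is the invariant: for every product $w=w_1w_2\cdots w_m$ of defect-$1$ words, the intermediate images $S_i:=Q\cdot w_1\cdots w_i$ all intersect $A$ non-trivially. This forces $Q\cdot w\neq P$ and contradicts the hypothesis. I would induct on $i$, with the base case $S_0=Q\supseteq A$ being trivial. For the step, set $f:=w_i$, $e:=\excl(f)$, $d:=\dupl(f)$, and let $\{q_1,q_2\}$ be the collision pair of $f$. When $e\in A$, closure of $A$ delivers $d\in A$; if moreover $\{q_1,q_2\}\subseteq S_{i-1}$, then $d\in S_i\cap A$ directly, and otherwise $f$ acts injectively on $S_{i-1}$, where one expects a short argument using $S_{i-1}\cap A\neq\emptyset$ to still produce an element of $S_i\cap A$.

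The principal obstacle I foresee is the remaining case $e\notin A$: closure imposes no restriction on how $f$ moves elements of $A$, so \emph{a priori} $f$ could push the whole of $S_{i-1}\cap A$ into $Q\setminus A$. To handle this I would refine the choice of $A$ by taking it to be a sink strongly connected component of the condensation of $\Gamma_1$, i.e., a minimal non-empty closed subset, which should impose much tighter structural constraints on how defect-$1$ words can act on $A$. I would then either strengthen the invariant to a potential function on $S_i$ that weights each $q\in A$ by additional data recording the ``history'' of the product so far, or pursue a dual tactic: apply the hypothesis to each subset $Q\setminus\{q\}$ with $q\in A$, analyse the resulting defect-$1$ products using the structural facts that inside such a product the $i$-th intermediate image has the form $Q\setminus\{\excl(w_i)\}$ and $\dupl(w_i)=\excl(w_{i-1})\cdot w_i$, and thereby aggregate enough edges of $\Gamma_1$ to produce one leaving $A$, directly contradicting closure.
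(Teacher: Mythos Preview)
The statement you are attempting to prove is presented in the paper as an \emph{open conjecture}; the authors do not supply a proof, so there is nothing in the paper against which your argument can be compared.

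Your contrapositive setup is natural, and the choice $P=Q\setminus A$ for a closed set $A$ in $\Gamma_1(\mathrsfs{A})$ is the right candidate (the paper's Example~\ref{examp:not necessary} instantiates this: there $A=\{1,2\}$ is the sink component and $\{3\}=Q\setminus A$ is precisely the subset that fails to be an image of a product of defect-$1$ words). However, the invariant $S_i\cap A\ne\emptyset$ is not established. In the sub-case $e\in A$ with neither element of the collision pair lying in $S_{i-1}$, injectivity of $f$ on $S_{i-1}$ gives you no control over where $S_{i-1}\cap A$ is sent, so ``one expects a short argument'' conceals a real gap. In the case $e\notin A$, which you correctly flag as the principal obstacle, closure of $A$ imposes no restriction whatsoever on the action of $f$ on $A$, and nothing you have assumed rules out a defect-$1$ word carrying all of $S_{i-1}\cap A$ into $Q\setminus A$.

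Your suggested repairs do not close the gap either. Passing to a minimal closed set adds no constraint on how a word with $\excl(f)\notin A$ moves points of $A$. The dual tactic---instantiating the hypothesis at each $Q\setminus\{q\}$ with $q\in A$---does produce many words in $D_1(\mathrsfs{A})$ with excluded state $q$ (indeed every suffix $w_i\cdots w_m$ of a witnessing product has image $Q\setminus\{q\}$, hence $\excl(w_i\cdots w_m)=q$), but closure then forces each corresponding $\dupl$ back into $A$, so you only manufacture edges of $\Gamma_1(\mathrsfs{A})$ lying \emph{inside} $A$, never one leaving it. As written, the proposal is an outline whose decisive step is missing; settling the conjecture would require an idea genuinely beyond the invariant you propose.
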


One can formulate further sufficient conditions for complete reachability in terms of strong connectivity of certain \emph{hypergraphs} related to words of defect 2.

\section{Complexity of Deciding Reachability}
\label{sec:complexity}

Given a DFA, one can easily decide whether or not it is completely reachable considering its powerset automaton: a DFA $\mathrsfs{A}=\langle Q,\Sigma\rangle$ is completely reachable if and only if $Q$ is connected with every its non-empty subset by a directed path in the powerset automaton $\mathcal{P}(\mathrsfs{A})$, and the latter property can be recognized by breadth-first search on $\mathcal{P}(\mathrsfs{A})$ starting at $Q$. This algorithm is however exponential with respect to the size of $\mathrsfs{A}$, and it is natural to ask   whether or not complete reachability can be decided in polynomial time. First, consider the following decision problem:

\smallskip

\hangindent=\parindent \noindent\textsc{Reachable Subset}: \emph{Given a DFA $\mathrsfs{A}=\langle Q,\Sigma,\delta\rangle$ and a non-empty subset $P\subseteq Q$,
is it true that $P$ is reachable in $\mathrsfs{A}$?}

\begin{theorem}
\label{thm:complexity}
The problem \textsc{Reachable Subset} is \textup{PSPACE}-complete.
\end{theorem}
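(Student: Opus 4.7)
The plan is to prove membership in PSPACE and PSPACE-hardness separately. For membership, I would exhibit a nondeterministic polynomial-space algorithm and appeal to Savitch's theorem. The algorithm maintains a single variable $S\subseteq Q$, initialized to $Q$, together with a step counter of $|Q|+1$ bits. At each step it nondeterministically chooses a letter $a\in\Sigma$, replaces $S$ by $S\dt a$, and accepts as soon as $S=P$; it halts and rejects once the counter exceeds $2^{|Q|}$, which is justified since the powerset automaton has $2^{|Q|}$ states and so any reachable subset is reached by a path of length strictly less than that. The total workspace is $O(|Q|)$ bits, placing the problem in NPSPACE, and hence in PSPACE.

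For hardness I would reduce from \textsc{DFA Intersection Non-Emptiness}, Kozen's classical PSPACE-complete problem. Given DFAs $A_1,\ldots,A_k$ over a common alphabet $\Sigma$, a routine preprocessing (appending an end-of-word marker to each) lets me assume that each $A_i$ has pairwise disjoint state set $Q_i$, unique start state $s_i$, and unique accepting state $f_i$ with $s_i\ne f_i$. I would then build a DFA $\mathrsfs{B}=\langle Q,\Sigma\cup\{\alpha\}\rangle$ with $Q:=Q_1\cup\cdots\cup Q_k$, where each letter of $\Sigma$ acts on each component as in $A_i$ and the fresh letter $\alpha$ sends every state of $Q_i$ to $s_i$. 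The target subset is $P:=\{f_1,\ldots,f_k\}$.

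The easy direction of the reduction is that if some $w\in\Sigma^*$ is commonly accepted by all $A_i$, then $Q\dt\alpha w=\{s_1,\ldots,s_k\}\dt w=P$. The main obstacle will be the converse: given an arbitrary $u$ with $Q\dt u=P$, extract a common accepting word. I plan to factor $u=u_1 u_2$ where $u_2$ is the longest suffix lying in $\Sigma^*$. If $u_1$ is nonempty it must end with $\alpha$, forcing $Q\dt u_1\subseteq\{s_1,\ldots,s_k\}$; combined with $s_i\ne f_i$ and the pairwise disjointness of the $Q_j$ this rules out $u_2=\varepsilon$, and then $|P|=k$ together with the fact that applying letters cannot grow the image forces $Q\dt u_1=\{s_1,\ldots,s_k\}$. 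In either case, since letters of $\Sigma$ act internally on each $Q_i$ and $Q_i\cap P=\{f_i\}$, I obtain $s_i\dt u_2=f_i$ for every $i$, so $u_2\in\bigcap_i L(A_i)$ as required.
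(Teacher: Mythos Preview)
Your proposal is correct and takes essentially the same approach as the paper: both reduce from Kozen's \textsc{Finite Automata Intersection} by forming the disjoint union of the input DFAs, adding a fresh ``reset'' letter that sends each component to its initial state, and taking the set of final states as the target subset $P$. The only differences are cosmetic---you spell out the NPSPACE membership argument where the paper merely cites a reference, and your end-of-word-marker preprocessing to force $s_i\ne f_i$ is harmless but not actually needed (if $u_2=\varepsilon$ then $s_i=f_i$ for all $i$ and the empty word already witnesses nonempty intersection).
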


\begin{proof}
The fact that \textsc{Reachable Subset} is in the class PSPACE is easy and known, see, e.g., \cite[Lemma 6, item 1]{Brandl&Simon:2015}.

To prove PSPACE-hardness of \textsc{Reachable Subset}, we reduce to it in logarithmic space the well-known PSPACE-complete problem \textsc{FAI} (\textsc{Finite Automata Intersection}, see \cite{Kozen:1977}). Recall that an instance of \textsc{FAI} consists of $k$ DFAs $\mathrsfs{A}_j=\langle Q_j,\Sigma,\delta_j\rangle$, $j=1,\dots,k$, with disjoint state sets and a common input alphabet. In each DFA $\mathrsfs{A}_j$ an \emph{initial state} $s_j\in Q_j$ and a \emph{final state} $t_j\in Q_j$ are specified; a word $w\in\Sigma^*$ is said to be \emph{accepted} by $\mathrsfs{A}_j$ if $\delta_j(s_j,w)=t_j$. The question of \textsc{FAI} asks whether or not there exists a word $w\in\Sigma^*$ which is simultaneously accepted by all automata $\mathrsfs{A}_1,\dots,\mathrsfs{A}_k$.

Now, given an instance of \textsc{FAI} as above, we construct the following instance $(\mathrsfs{A},P)$ of \textsc{Reachable Subset}. The state set of the DFA $\mathrsfs{A}$ is $Q:=\bigcup_{j=1}^kQ_j$; the input alphabet of $\mathrsfs{A}$ is $\Sigma$ with one extra letter $\rho$ added. The transition function $\delta\colon Q\times(\Sigma\cup\{\rho\})\to Q$ is defined by the rule
\begin{equation}
\label{eq:rule}
\delta(q,a):=\begin{cases}
\delta_j(q,a)&\text{if $a\in\Sigma$ and $q\in Q_j$},\\
s_j&\text{if $a=\rho$ and $q\in Q_j$}.
\end{cases}
\end{equation}
Expressing this rule less formally, it says that, given a state $q\in Q$, one first should find the index $j\in\{1,\dots,k\}$ such that $q$ belongs to $Q_j$; then every letter $a\in\Sigma$ acts on $q$ in the same way as it does in the automaton $\mathrsfs{A}_j$ while the added letter $\rho$ sends $q$ to the initial state $s_j$ of $\mathrsfs{A}_j$ (so $\rho$ artificially `initializes' each $\mathrsfs{A}_j$). Observe that each set $Q_j$ is closed under the action of each letter in $\Sigma\cup\{\rho\}$. Finally, we set $P:=\{t_1,\dots,t_k\}$, that is, $P$ consists of the final states of $\mathrsfs{A}_1,\dots,\mathrsfs{A}_k$.

We claim that the subset $P$ is reachable in $\mathrsfs{A}$ if and only if there exists a word $w\in\Sigma^*$ which is simultaneously accepted by all automata $\mathrsfs{A}_1,\dots,\mathrsfs{A}_k$. Indeed, if such a word $w$ exists, then $\delta(Q,\rho w)=P$ since we have $\delta(Q,\rho)=\{s_1,\dots,s_k\}$ by \eqref{eq:rule} and
$\delta(s_j,w)=\delta_j(s_j,w)=t_j$ for each $j=1,\dots,k$ by the choice of $w$. Conversely, suppose that $P$ is reachable in $\mathrsfs{A}$, that is, $\delta(Q,u)=P$ for some word $u\in(\Sigma\cup\{\rho\})^*$. Then we must have $\delta_j(Q_j,u)=\{t_j\}$ for each $j=1,\dots,k$. If the word $u$ has no occurrence of the letter $\rho$, then $u\in\Sigma^*$ and
$\delta_j(s_j,u)=\{t_j\}$ for each $j=1,\dots,k$ so that $u$ is simultaneously accepted by all automata $\mathrsfs{A}_1,\dots,\mathrsfs{A}_k$. Otherwise we fix the rightmost occurrence of $\rho$ in $u$ and denote by $w$ the suffix of $u$ following this occurrence so that $w\in\Sigma^*$ and $u=v\rho w$ for some $v\in(\Sigma\cup\{\rho\})^*$. Then
$\delta_j(Q_j,v\rho)=\{s_j\}$ and $\delta_j(s,w)=\delta(Q_j,v\rho w)=\{t_j\}$ for each $j=1,\dots,k$. We conclude that $w$ is simultaneously accepted by all automata $\mathrsfs{A}_1,\dots,\mathrsfs{A}_k$. This completes the proof of our claim and establishes the reduction which obviously can be implemented in logarithmic space.
\end{proof}

The reduction used in the above proof is an adaptation of a slightly more involved log-space reduction used by Brandl and Simon~\cite[Section~3]{Brandl&Simon:2015} to show PSPACE-hardness of a natural problem about transformation monoids presented by a bunch of generating transformations. Using a trick from Martyugin's paper~\cite{Martyugin:2014},
one can modify this reduction to show that \textsc{Reachable Subset} remains PSPACE-complete even if restricted to automata with only 2 input letters. 

In connection with Theorem~\ref{thm:complexity}, an interesting result by Goral\v{c}\'{\i}k and Koubek~\cite[Theorem~1]{Goralcik&Koubek:1995} is worth being mentioned. If stated in the language adopted in the present paper, their result says that, given a DFA $\mathrsfs{A}=\langle Q,\Sigma\rangle$ with $|Q|=n$, $|\Sigma|=m$ and a subset $P\subseteq Q$ with $|P|=k$, one can decide in $O\big((k+1)n^{k+1}m\big)$ time whether or not there exists a word $w\in\Sigma^*$ such that $P=Q\dt w=P\dt w$. (The difference from our definition of reachability is that here one looks for a word not only having the subset $P$ as its image but also acting on $P$ as a permutation.) Thus, if the size of the target set $P$ is treated as a parameter, the algorithm from~\cite{Goralcik&Koubek:1995} becomes polynomial. One can ask if a similar result holds for the parameterized version of \textsc{Reachable Subset} formulated as follows:

\smallskip

\hangindent=\parindent \noindent\textsc{Reachable Subset}$_k$: \emph{Given a DFA $\mathrsfs{A}=\langle Q,\Sigma,\delta\rangle$ and a non-empty subset $P\subseteq Q$ of size $k$,
is it true that $P$ is reachable in $\mathrsfs{A}$?}

\smallskip

For $k=1$, the cited result by Goral\v{c}\'{\i}k and Koubek applies since, for $P$ being a singleton, any word $w\in\Sigma^*$ such that $P=Q\dt w$ automatically satisfies the additional condition $P\dt w=P$. For $k>1$, the question about the exact complexity of \textsc{Reachable Subset}$_k$ is open. The reduction from the proof of  Theorem~\ref{thm:complexity} cannot help here because the size $k$ of the subset $P$ in this reduction is equal to the number of DFAs in the instance of \textsc{FAI} from which we depart, and for each fixed $k$, there is a polynomial algorithm that decides on all instances of \textsc{FAI} with $k$ automata. Pribavkina and Rodaro~\cite[Sections~7 and~8]{Pribavkina&Rodaro:2011} used \textsc{Reachable Subset}$_2$ as an intermediate problem in their study of so-called \emph{finitely generated} \sa; from their results it follows that \textsc{Reachable Subset}$_2$ is co-NP-hard even if restricted to automata with only 2 input letters.  

Now we return to the question of whether or not complete reachability can be decided in polynomial time. It should be noted that Theorem~\ref{thm:complexity} does not imply any hardness conclusion here: while checking reachability of individual subsets is PSPACE-complete, checking reachability of all non-empty subsets may still be polynomial even though the latter problem consists of exponentially many individual problems! One can illustrate this phenomenon of `simplification due to collectivization' with the following example.
If is known~\cite{Kozen:1977} that the following \emph{membership problem} for transition monoids of DFAs is PSPACE-complete: given a DFA $\mathrsfs{A}=\langle Q,\Sigma\rangle$ and a transformation $\varphi\colon Q\to Q$, does $\varphi$ belongs to the transition monoid $M(\mathrsfs{A})$, i.e., is there a word $w\in\Sigma^*$ such that $q\varphi=q\dt w$ for all $q\in Q$? On the other hand, one can decide in polynomial time whether or not \emph{every} transformation of the state set belongs to the transition monoid of a given DFA. Indeed, given a DFA $\mathrsfs{A}=\langle Q,\Sigma\rangle$, we partition the alphabet $\Sigma$ as $\Sigma=\Pi\cup\Delta$, where $\Pi$ consists of all letters that act on $Q$ as permutations and $\Delta$ contains all letters with non-zero defect. First we inspect $\Delta$: if no letter in $\Delta$ has defect~1, then it is clear that the monoid $M(\mathrsfs{A})$ contains no transformation of defect~1 (see the observation registered at the beginning of Section~\ref{sec:sufficient}). Further, we  invoke twice the polynomial algorithm by Furst, Hopcroft and Luks~\cite{Furst&Hopcroft&Luks:1980} for the membership problem in permutation groups: we fix a cyclic permutation and a transposition of $Q$ and check if they belong to the permutation group on $Q$ generated by the permutations induced by the letters in $\Pi$. If the answers to all these queries are affirmative, then  $M(\mathrsfs{A})$ contains a cyclic permutation, a transposition, and a transformation of defect 1, and it is well-known that any such trio of transformations generates the full transformation monoid $T(Q)$, see, e.g., \cite[Theorem~3.1.3]{Ganyushkin&Mazorchuk:2009}.

Thus, the complexity of deciding complete reachability for a given DFA remains unknown so far. We expect this problem to be computationally hard for automata over unrestricted alphabets while for automata with a fixed number of letters a polynomial algorithm may exist. For instance, if Conjecture~\ref{conj:necessary} holds true, there exists a polynomial algorithm that recognizes \cra\ among DFAs with 2 input letters. Indeed, let $\mathrsfs{A}=\langle Q,\{a,b\}\rangle$ be a DFA with $n$ states, $n>1$. Every subset of the form $Q\dt w$, where $w$ is a non-empty word over $\{a,b\}$, is contained in either $Q\dt a$ or $Q\dt b$. At least one of the letters must have defect~1 since no subset of size $n-1$ is reachable otherwise, and if the other letter has defect greater than 1, only one subset of size $n-1$ is reachable. Hence, if $\mathrsfs{A}$ is a \cran, one of its letters  has defect~1 while the other has defect at most 1. Therefore for each proper reachable subset $P\subset Q$, there is a product $w$ of words of defect~$1$ with respect to $\mathrsfs{A}$ such that $P=Q\dt w$. In view of Theorem~\ref{thm:sufficient}, if Conjecture~\ref{conj:necessary} holds true, then complete reachability of $\mathrsfs{A}$ is equivalent to strong connectivity of the graph $\Gamma_1(\mathrsfs{A})$. It remains to show that for automata with 2 input letters, the latter condition can be verified in polynomial time.

Once the graph $\Gamma_1(\mathrsfs{A})$ is constructed, checking its strong connectivity in polynomial time makes no difficulty. However, it is far from being obvious that $\Gamma_1(\mathrsfs{A})$, even though it definitely has polynomial size, can always be constructed in polynomial time. Indeed, by the definition, the edges of $\Gamma_1(\mathrsfs{A})$ arise from transformations of defect~1 in the transition monoid of $\mathrsfs{A}$, and for an automaton with $n$ states, the number of transformations of defect~1 in $M(\mathrsfs{A})$ may reach $n!\binom{n}2$. Our algorithm depends on some peculiarities of automata with 2~input letters. It incrementally appends edges to a spanning subgraph of $\Gamma_1(\mathrsfs{A})$ in a way such that one can reach a conclusion about strong connectivity of $\Gamma_1(\mathrsfs{A})$ by examining only polynomially many transformations of defect~1. In the following brief and rather informal description of the algorithm, we use the notation introduced in Section~\ref{sec:sufficient} in the course of defining the graph $\Gamma_1(\mathrsfs{A})$.

Thus, again, let $\mathrsfs{A}=\langle Q,\{a,b\}\rangle$ be a DFA with $n$ states, $n>1$. For certainty, let $a$ stand for the letter of defect~1. If $b$ also has defect~1, then at most two subsets of size $n-1$ are reachable (namely, $Q\dt a$ and $Q\dt b$), and $\mathrsfs{A}$ can only be completely reachable provided that $n=2$.
\begin{figure}[ht]
\begin{center}
  \unitlength=3.5pt
    \begin{picture}(25,12)(0,-7)
    \gasset{curvedepth=4}
    \node(A1)(0,0){$1$}
    \drawloop[loopangle=180](A1){$a$}
    \node(A2)(25,0){$2$}
    \drawloop[loopangle=0](A2){$b$}
    \drawedge(A1,A2){$b$}
    \drawedge(A2,A1){$a$}
    \end{picture}
\caption{Filp-flop}\label{fig:flip-flop}
\end{center}
\end{figure}
The automaton $\mathrsfs{A}$ is then nothing but the classical flip-flop, see Fig.~\ref{fig:flip-flop}. Beyond this trivial case, $b$ must be a permutation of $Q$ whence $b^n$ acts on $Q$ as the identity transformation. Then the set $\{\excl(w)\mid w\in D_1(\mathrsfs{A})\}$ of the states at which edges of $\Gamma_1(\mathrsfs{A})$ may originate is easily seen to coincide with the set $\{\excl(a),\excl(ab),\dots,\excl(ab^{n-1})\}$. For $\Gamma_1(\mathrsfs{A})$ to be \scn, it is necessary that every vertex is an origin of an edge whence the latter set must be equal to $Q$. Taking into account that $\excl(ab^k)=\excl(a)\dt b^k$ for each $k=1,\dots,n-1$, we conclude that $b$ must be a cyclic permutation of $Q$. It is easy to show that $\excl(w)\dt b=\excl(wb)$ and $\dupl(w)\dt b=\dupl(wb)$ for every word $w$ of defect~1, and therefore, $b$ acts as a permutation on the edge set $E_1$ of $\Gamma_1(\mathrsfs{A})$.

The set $E_1$ contains the edges
\begin{equation}
\label{eq:edges}
(\excl(a),\dupl(a)),\dots,(\excl(ab^{n-1}),\dupl(ab^{n-1})).
\end{equation}
Since $\dupl(ab^k)=\dupl(a)\dt b^k$ for each $k=1,\dots,n-1$, the edges in~\eqref{eq:edges} are the `translates' of the edge $(\excl(a),\dupl(a))$. Any two edges in \eqref{eq:edges} start at different vertices and end at different vertices, whence for some $d$ such that $d<n$ and $d$ divides $n$, the edges in \eqref{eq:edges} form $d$ directed cycles, each of size $\frac{n}d$. If $d=1$, we can already conclude that the graph $\Gamma_1(\mathrsfs{A})$ is \scn. If $d>1$, denote the cycles by $C_1,\dots,C_d$ and consider the words $a^2, aba, \dots, ab^{n-1}a$. It can be easily shown that exactly two of them have defect~1; let us denote these two words by $w_1$ and $w_2$. Since $w_1$ and $w_2$ end with $a$, we have $Q\dt w_1=Q\dt w_2=Q\dt a$ whence $\excl(w_1)=\excl(w_2)=\excl(a)$. Thus, the edges $(\excl(w_1),\dupl(w_1))$ and $(\excl(w_2),\dupl(w_2))$ start at the vertex $\excl(a)$ which can be assumed to belong to the cycle $C_1$. If also the ends of these edges lie in $C_1$, one can show that no further edge in $E_1$ can connect $C_1$ with another cycle whence $C_1$ forms a \scc\ of $\Gamma_1(\mathrsfs{A})$. We then conclude that $\Gamma_1(\mathrsfs{A})$ is not \scn.

Now suppose that the edge $(\excl(w_i),\dupl(w_i))$ where $i=1$ or $i=2$ connects the vertex $\excl(a)$ with a vertex from the cycle $C_j$ where $j>1$. Then we append the edge and all its translates $(\excl(w_ib^k),\dupl(w_ib^k))$, $k=1,\dots,n-1$, to $C_1,\dots,C_d$; in the case where both $(\excl(w_1),\dupl(w_1))$ and $(\excl(w_2),\dupl(w_2))$ leave $C_1$, we append both these edges and all their translates. After that, we get larger \scn\ subgraphs $D_1,\dots, D_\ell$ isomorphic to each other, where $\ell<d$ and $\ell$ divides $d$. If $\ell=1$, then the graph $\Gamma_1(\mathrsfs{A})$ is \scn. If $\ell>1$, we iterate by considering the words $w_ia, w_iba, \dots, w_ib^{n-1}a$. Eventually, either we reach a \scn\ spanning subgraph of $\Gamma_1(\mathrsfs{A})$, and then the graph $\Gamma_1(\mathrsfs{A})$ is \scn\ as well, or on some step the process gets stacked, which means that $\Gamma_1(\mathrsfs{A})$ has a proper \scc, and therefore, is not \scn.

The described process branches, and in the worst case the number of words of defect~1 to be analyzed doubles at each step. On the other hand, since the steps are indexed by a chain of divisors of $n$, the number of steps does not exceed $\log_2n+1$. Thus, executing the algorithm, we have to analyze at most
\[
1+2+4+\dots +2^{\lceil\log_2n\rceil+1}=O(n)
\]
words of maximum length $O(n\log_2n)$, and therefore, the algorithm can be implemented in polynomial time.

\smallskip

We illustrate the above algorithm by running it on the DFA $\mathrsfs{E}_6$ with the state set $\{1,2,3,4,5,6\}$ and the input letters $a,b$ that act as follows:
\[
i\dt a:=\begin{cases}
i+1 &\text{if } i=1,2,\\
9-i &\text{if } i=3,6,\\
i &\text{if } i=4,5;
\end{cases}
\qquad
i\dt b:=\begin{cases}
i+1 &\text{if } i<6,\\
1 &\text{if } i=6.
\end{cases}
\]
The automaton $\mathrsfs{E}_6$ is shown in Fig.\,\ref{fig:e6}.
\begin{figure}[ht]
\begin{center}
\unitlength .5mm
\begin{picture}(72,72)(0,-86)
\gasset{Nw=16,Nh=16,Nmr=8} \node(n1)(56.0,-16.0){1} \node(n2)(68.0,-44.0){2}
\node(n3)(56.0,-72.0){3} \node(n4)(16.0,-72.0){4}
\node(n5)(4.0,-44.0){5} \node(n6)(16.0,-16.0){6}
\drawedge[ELdist=1.7](n1,n2){$a,b$} \drawedge[ELdist=1.7](n2,n3){$a,b$}
\drawedge[ELdist=1.7](n3,n4){$b$}
\drawedge[ELdist=1.7](n4,n5){$b$} \drawedge[ELdist=1.7](n5,n6){$b$}
\drawedge[ELdist=1.7](n6,n1){$b$}
\drawedge[curvedepth=3,ELdist=1.7](n6,n3){$a$}
\drawedge[curvedepth=3,ELdist=1.7](n3,n6){$a$}
\drawloop[ELdist=1.5,loopangle=240](n4){$a$}
\drawloop[ELdist=1.5,loopangle=180](n5){$a$}
\end{picture}
\end{center}
\caption{The automaton $\mathrsfs{E}_6$}\label{fig:e6}
\end{figure}

\begin{example}
\label{examp:e6}
The automaton $\mathrsfs{E}_6$ is completely reachable.
\end{example}

We verify the claim of Example~\ref{examp:e6} by constructing a \scn\ spanning subgraph of the graph $\Gamma_1(\mathrsfs{E}_6)$. As $\excl(a)=1$ and $\dupl(a)=3$, we see that the edge set $E_1$ of $\Gamma_1(\mathrsfs{E}_6)$ contains the edge $1\to 3$, and hence, also its translates $2\to 4$, $3\to 5$, $4\to 6$, $5\to 1$, and $6\to 2$. Altogether, we have 6 edges of the form~\eqref{eq:edges}; they make 2 disjoint cycles $1\to 3\to 5\to 1$ and $2\to 4\to 6\to 2$, and hence, after the first step of the algorithm we cannot yet exhibit a
\scn\ spanning subgraph in $\Gamma_1(\mathrsfs{E}_6)$.

We then proceed by inspecting the words $a^2, aba, ab^2a, ab^3a, ab^4a, ab^5a$. The actions of these words are shown in the following table:
\begin{center}
\begin{tabular}{c@{\rule{5pt}{0pt}}|@{\rule{5pt}{0pt}}c@{\rule{5pt}{0pt}}c@{\rule{5pt}{0pt}}c@{\rule{5pt}{0pt}}c@{\rule{5pt}{0pt}}c@{\rule{5pt}{0pt}}c}
        & 1 & 2 & 3 & 4 & 5 & 6\\
\hline
$a^2$\rule{0pt}{13pt}   & 3 & 6 & 3 & 4 & 5 & 6\\
$aba$\rule{0pt}{13pt}   & 6 & 4 & 2 & 5 & 3 & 4\\
$ab^2a$\rule{0pt}{13pt} & 4 & 5 & 3 & 3 & 2 & 5\\
$ab^3a$\rule{0pt}{13pt} & 5 & 3 & 6 & 2 & 3 & 3\\
$ab^4a$\rule{0pt}{13pt} & 3 & 2 & 4 & 3 & 6 & 2\\
$ab^5a$\rule{0pt}{13pt} & 2 & 3 & 5 & 6 & 4 & 3
\end{tabular}
\end{center}
We see that exactly two words, namely, $aba$ and $ab^5a$, are of defect~1. Since $\dupl(ab^5a)=3$, we discard the word $ab^5a$ because it induces the same edge $1\to 3$ as $a$. In contrast, $\dupl(aba)=4$ whence the word $aba$ adds the edge $1\to 4$ to $E_1$. If we add also the translates $2\to 5$, $3\to 6$, $4\to 1$, $5\to 2$, and $6\to 3$ of this edge, we obtain the \scn\ spanning subgraph in $\Gamma_1(\mathrsfs{E}_6)$ shown in Fig~\ref{fig:graph for e6}.
\begin{figure}[ht]
\begin{center}
\unitlength=.85mm
\begin{picture}(60,45)(0,5)\nullfont
\node(n1)(10,10){1} \node(n2)(10,30){3} \node(n3)(10,50){5}
\node(n4)(50,10){4} \node(n5)(50,30){6} \node(n6)(50,50){2}
\drawedge(n1,n2){} \drawedge(n2,n3){}
\drawedge[curvedepth=-10](n3,n1){}
\drawedge(n4,n5){} \drawedge(n5,n6){}
\drawedge[curvedepth=10](n6,n4){}
\drawedge[curvedepth=2](n1,n4){}
\drawedge[curvedepth=2](n4,n1){}
\drawedge[curvedepth=2](n2,n5){}
\drawedge[curvedepth=2](n5,n2){}
\drawedge[curvedepth=2](n3,n6){}
\drawedge[curvedepth=2](n6,n3){}
\end{picture}
\caption{A  \scn\ spanning subgraph in the graph $\Gamma_1(\mathrsfs{E}_6)$}\label{fig:graph for e6}
\end{center}
\end{figure}
Thus, the graph $\Gamma_1(\mathrsfs{E}_6)$ is \scn\ and the claim of Example~\ref{examp:e6} follows from Theorem~\ref{thm:sufficient}.

\smallskip

Don~\cite[Proposition~2]{Don:2015} has found a sufficient condition for complete reachability of a DFA $\mathrsfs{A}$ that has a letter ($a$, say) of defect 1 and a letter ($b$, say) that act as a cyclic permutation of the state set. Namely, let $d$ be the least integers such that $\excl(a)\dt b^d=\dupl(a)$. If $d$ and the number of states in $\mathrsfs{A}$ are coprime, then $\mathrsfs{A}$ is completely reachable. Our Example~\ref{examp:e6} demonstrates that this condition is not necessary: the automaton $\mathrsfs{E}_6$ is completely reachable while the parameter $d$ for this DFA equals 2 and divides the number of states. In fact, it is easy to see that Don's condition precisely characterizes DFAs $\mathrsfs{A}$ for which our algorithm produces a  \scn\ spanning subgraph in the graph $\Gamma_1(\mathrsfs{A})$ after the first step.

We conclude this section with another example that shows the behaviour of our algorithm in the situation where the DFA under investigation is not completely reachable. Consider 
the DFA $\mathrsfs{E}'_6$ with the state set $\{1,2,3,4,5,6\}$ and the input letters $a,b$ that act as follows:
\[
i\dt a:=\begin{cases}
i+1 &\text{if } i=1,2,\\
9-i &\text{if } i=3,4,5,6;
\end{cases}
\qquad
i\dt b:=\begin{cases}
i+1 &\text{if } i<6,\\
1 &\text{if } i=6.
\end{cases}
\]
The automaton $\mathrsfs{E}'_6$ is shown in Fig.\,\ref{fig:eprime6}.
\begin{figure}[ht]
\begin{center}
\unitlength .5mm
\begin{picture}(72,72)(0,-86)
\gasset{Nw=16,Nh=16,Nmr=8} \node(n1)(56.0,-16.0){1} \node(n2)(68.0,-44.0){2}
\node(n3)(56.0,-72.0){3} \node(n4)(16.0,-72.0){4}
\node(n5)(4.0,-44.0){5} \node(n6)(16.0,-16.0){6}
\drawedge[ELdist=1.7](n1,n2){$a,b$} \drawedge[ELdist=1.7](n2,n3){$a,b$}
\drawedge[ELdist=1.7](n3,n4){$b$}
\drawedge[curvedepth=3,ELdist=1.7](n4,n5){$a,b$}
\drawedge[curvedepth=3,ELdist=1.7](n5,n4){$a$} 
\drawedge[ELdist=1.7](n5,n6){$b$}
\drawedge[ELdist=1.7](n6,n1){$b$}
\drawedge[curvedepth=3,ELdist=1.7](n6,n3){$a$}
\drawedge[curvedepth=3,ELdist=1.7](n3,n6){$a$}
\end{picture}
\end{center}
\caption{The automaton $\mathrsfs{E}'_6$}\label{fig:eprime6}
\end{figure}
 
It can be verified that $\mathrsfs{E}'_6$ is not completely reachable, and moreover, it is not synchronizing. Let us run our algorithm on this DFA. As in the previous example, 
the first step of the algorithm produces 2 disjoint cycles $1\to 3\to 5\to 1$ and $2\to 4\to 6\to 2$. We then proceed by inspecting the words $a^2, aba, ab^2a, ab^3a, ab^4a, ab^5a$ whose actions are gathered in the following table:
\begin{center}
\begin{tabular}{c@{\rule{5pt}{0pt}}|@{\rule{5pt}{0pt}}c@{\rule{5pt}{0pt}}c@{\rule{5pt}{0pt}}c@{\rule{5pt}{0pt}}c@{\rule{5pt}{0pt}}c@{\rule{5pt}{0pt}}c}
        & 1 & 2 & 3 & 4 & 5 & 6\\
\hline
$a^2$\rule{0pt}{13pt}   & 3 & 6 & 3 & 4 & 5 & 6\\
$aba$\rule{0pt}{13pt}   & 6 & 5 & 2 & 3 & 4 & 5\\
$ab^2a$\rule{0pt}{13pt} & 5 & 4 & 3 & 2 & 3 & 4\\
$ab^3a$\rule{0pt}{13pt} & 4 & 3 & 6 & 3 & 2 & 3\\
$ab^4a$\rule{0pt}{13pt} & 3 & 2 & 5 & 6 & 3 & 2\\
$ab^5a$\rule{0pt}{13pt} & 2 & 3 & 4 & 5 & 6 & 3
\end{tabular}
\end{center}
Again, we see that $aba$ and $ab^5a$ are the only words of defect~1, and again, $ab^5a$ induces the same edge $1\to 3$ as $a$. We have $\dupl(aba)=5$ whence $aba$ adds the edge $1\to 5$; this new edge, however, fails to connect the cycle $1\to 3\to 5\to 1$ with $2\to 4\to 6\to 2$, and the algorithm stops.

\section{Minimal Completely Reachable Automata}
\label{sec:minimal}

\emph{Syntactic complexity of a regular language} is a well established concept that has attracted much attention lately, see, e.g., \cite{Brzozowski&Li:2014,Brzozowski&Szykula:2015}. It can be defined as the size of the transition monoid of the minimal DFA recognizing the language. It appears to be worthwhile to extend this concept to automata by defining the \emph{syntactic complexity} of an arbitrary DFA $\mathrsfs{A}$ as the size of its transition monoid $M(\mathrsfs{A})$. In fact, if one thinks of a DFA as a computational device rather than acceptor, its transition monoid can be thought of as the device's `software library' since the monoid contains exactly all programs (transformations) that the automaton can execute. From this viewpoint, measuring the complexity of an automaton by the size of its `software library' is fairly natural.

As already mentioned in Section~\ref{sec:intro}, the syntactic complexity of a \cran\ with $n$ states cannot be less than $2^n-1$. It turns out that this lower bound is tight if one considers automata over unrestricted alphabet. We present now a construction for \cra\ with $n$ states and syntactic complexity $2^n-1$; for short, we call them \emph{minimal \cra}.

Our construction produces minimal \cra\ from full binary trees satisfying certain subordination conditions. Recall that a binary tree is said to be \emph{full} if each its vertex $v$ either is a leaf or has exactly two children that we refer to as the \emph{left child} or the \emph{son} of $v$ and the \emph{right child} or the \emph{daughter} of~$v$. (Thus, all vertices except the root have a gender.) It is well known (and easy to verify) that a full binary tree with $n$ leaves has $2n-1$ vertices. As full binary trees are the only trees occurring in this paper, we call them just trees in the sequel.

If $\Gamma$ is a tree and $v$ is a vertex in $\Gamma$, we denote by $\Gamma_v$ the subtree of $\Gamma$ rooted at $v$. The \emph{span} of $v$, denoted $\spn(v)$, is the number of leaves in the subtree $\Gamma_v$. Fig.~\ref{fig:tree} shows a tree with vertices labelled by their spans.
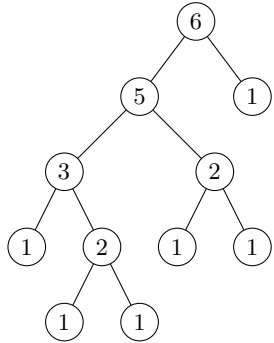
\begin{figure}[hb]
\begin{center}
\unitlength 0.50mm
\begin{picture}(80,78)(0,5)
\gasset{AHnb=0}
\gasset{Nw=10,Nh=10,Nmr=5}
\node(1)(10,30){$1$}
\node(2)(20,10){$1$}
\node(3)(40,10){$1$}
\node(4)(30,30){$2$}
  \drawedge(2,4){}
  \drawedge(3,4){}
\node(5)(20,50){$3$}
  \drawedge(1,5){}
  \drawedge(4,5){}
\node(6)(50,30){$1$}
\node(7)(70,30){$1$}
\node(8)(60,50){$2$}
  \drawedge(6,8){}
  \drawedge(7,8){}
\node(9)(40,70){$5$}
  \drawedge(5,9){}
  \drawedge(8,9){}
\node(10)(70,70){$1$}
\node(11)(55,90){$6$}
  \drawedge(9,11){}
  \drawedge(10,11){}
\end{picture}
\caption{An example of a tree with spans of its vertices shown}\label{fig:tree}
\end{center}
\end{figure}

By a \emph{homomorphism} between two trees $\Gamma_1$ and $\Gamma_2$ we mean a map from the vertex set of $\Gamma_1$ into the vertex set of $\Gamma_2$ that sends the root of $\Gamma_1$ to the root of $\Gamma_2$ and preserves the parent--child relation and the genders of non-root vertices. Given two trees $\Gamma_1$ and $\Gamma_2$, we say that $\Gamma_1$ \emph{subordinates} $\Gamma_2$ if there exists a 1-1 homomorphism $\Gamma_1\to\Gamma_2$. If $u$ and $v$ are two vertices of the same tree $\Gamma$, we say that $u$ \emph{subordinates} $v$ if the subtree $\Gamma_u$ subordinates the subtree $\Gamma_v$. A tree is said to be \emph{respectful} if it satisfies two
conditions:
\begin{enumerate}
\itemindent = .3cm
\item[(S1)] if a male vertex has a nephew, the nephew subordinates his uncle;
\item[(S2)] if a female vertex has a niece, the niece subordinates her aunt.
\end{enumerate}

For an illustration, the tree shown in Fig.~\ref{fig:tree} satisfies (S1) but fails to satisfy (S2): the daughter of the root has a niece but this niece does not subordinates her aunt. On the other hand, the tree shown in Fig.~\ref{fig:respectful tree} is respectful. (In order to ease the inspection of this claim, we have shown the uncle--nephew and the aunt--niece relations in this tree with dotted and dashed arrows respectively.)
\begin{figure}[hb]
\begin{center}
\unitlength 0.50mm
\begin{picture}(100,78)(0,5)
\gasset{AHnb=0}
\gasset{Nw=10,Nh=10,Nmr=5}
\node(1)(10,30){$1$}
\node(2)(20,10){$1$}
\node(3)(40,10){$1$}
\node(4)(30,30){$2$}
  \drawedge(2,4){}
  \drawedge(3,4){}
\node(5)(20,50){$3$}
  \drawedge(1,5){}
  \drawedge(4,5){}
\node(6)(50,30){$1$}
\node(7)(70,30){$1$}
\node(8)(60,50){$2$}
  \drawedge(6,8){}
  \drawedge(7,8){}
\node(9)(40,70){$5$}
  \drawedge(5,9){}
  \drawedge(8,9){}
\node(10)(90,70){$2$}
\node(11)(55,90){$7$}
  \drawedge(9,11){}
  \drawedge(10,11){}
\node(12)(80,50){$1$}
\node(13)(100,50){$1$}
  \drawedge(12,10){}
  \drawedge(13,10){}
\gasset{AHnb=1,dash={0.2 1.2}0}
  \drawedge[sxo=7,syo=-3,exo=-6,eyo=5](9,12){}
  \drawedge[sxo=7,syo=-3,exo=-7,eyo=4](5,6){}
  \drawedge[sxo=4,syo=-6,exo=-2,eyo=4](1,2){}
\gasset{AHnb=1,dash={1.5}{1.5}}
  \drawedge[curvedepth=-2,exo=3.5,eyo=5.5](10,8){}
  \drawedge[curvedepth=-2,exo=3.5,eyo=5.5](8,4){}
\end{picture}
\caption{An example of a respectful tree}\label{fig:respectful tree}
\end{center}
\end{figure}
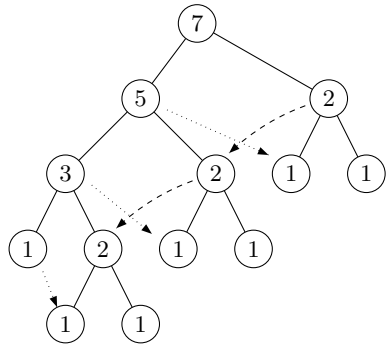

It is easy to show that there exist respectful trees with any number of leaves. In the following table (borrowed from~\cite{Bondar:2016}) we present the numbers of respectful trees with up to 10 leaves.
\begin{center}
\begin{tabular}[t]{|p{4.4cm}|c|c|c|c|c|c|c|c|c|c|c|c|c|c|c|}
\hline
Number of leaves\rule{0pt}{12pt}   &\rule{5pt}{0pt}1\rule{5pt}{0pt}&\rule{5pt}{0pt}2\rule{5pt}{0pt}&\rule{5pt}{0pt}3\rule{5pt}{0pt}
&\rule{5pt}{0pt}4\rule{5pt}{0pt}&\rule{5pt}{0pt}5\rule{5pt}{0pt}&\rule{5pt}{0pt}6\rule{5pt}{0pt}
&\rule{5pt}{0pt}7\rule{5pt}{0pt}&\rule{5pt}{0pt}8\rule{5pt}{0pt}&\rule{5pt}{0pt}9\rule{5pt}{0pt}
&\rule{3pt}{0pt}10\rule{5pt}{0pt} \\
\hline
Number of respectful trees\rule{0pt}{12pt} &1&1&2&3&6&10&18&32&58&101 \\
\hline
\end{tabular}
\end{center}
We are not aware of any closed formula for the number of respectful trees with a given number of leaves.

In our construction, we use certain markings of trees by intervals of the set $\mathbb{N}$ of positive integers considered as a chain under the usual order:
\[
1<2<\dots<n<\dotsc.
\]
If $i,j\in\mathbb{N}$ and $i\le j$, the \emph{interval} $[i,j]$ is the set $\{k\in X_n\mid i\le k\le j\}$. We write $[i]$ instead of $[i,i]$. Now, a \emph{faithful interval marking} of a tree $\Gamma$ is a map $\mu$ from the vertex set of $\Gamma$ into the set of all intervals in $\mathbb{N}$ such that for each vertex $v$,
\begin{itemize}
\item the number of elements in the interval $v\mu$ is equal to $\spn(v)$;
\item if $v\mu=[i,j]$ and $s$ and $d$ are respectively the son and the daughter of $v$, then $s\mu=[i,k]$ and $d\mu=[k+1,j]$ for some $k$ such that $i\le k<j$.
\end{itemize}
It easy to see that every tree $\Gamma$ admits a faithful interval marking which is unique up to an additive translation: given any two markings $\mu,\mu'$ of $\Gamma$, there is an integer $m$ such that $v\mu=v\mu'+m$ for every vertex $v$.
Observe that if $\mu$ is a faithful interval marking of a tree $\Gamma$ and $v$ is a vertex of $\Gamma$, then the restriction of $\mu$ to the subtree $\Gamma_v$ is a faithful interval marking of the latter. Fig.~\ref{fig:marking} demonstrates a faithful interval marking of the tree from Fig.~\ref{fig:respectful tree}.
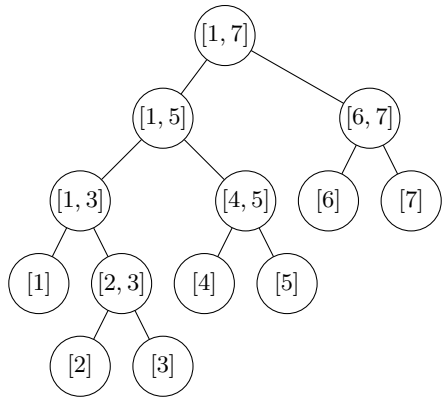
\begin{figure}[hb]
\begin{center}
\unitlength 0.55mm
\begin{picture}(100,81)(0,5)
\gasset{AHnb=0}
\node(1)(10,30){$[1]$}
\node(2)(20,10){$[2]$}
\node(3)(40,10){$[3]$}
\node(4)(30,30){$[2,3]$}
  \drawedge(2,4){}
  \drawedge(3,4){}
\node(5)(20,50){$[1,3]$}
  \drawedge(1,5){}
  \drawedge(4,5){}
\node(6)(50,30){$[4]$}
\node(7)(70,30){$[5]$}
\node(8)(60,50){$[4,5]$}
  \drawedge(6,8){}
  \drawedge(7,8){}
\node(9)(40,70){$[1,5]$}
  \drawedge(5,9){}
  \drawedge(8,9){}
\node(10)(90,70){$[6,7]$}
\node(11)(55,90){$[1,7]$}
  \drawedge(9,11){}
  \drawedge(10,11){}
\node(12)(80,50){$[6]$}
\node(13)(100,50){$[7]$}
  \drawedge(12,10){}
  \drawedge(13,10){}
\end{picture}
\caption{A faithful interval marking of the tree from Fig.~\ref{fig:respectful tree}}\label{fig:marking}
\end{center}
\end{figure}

We have prepared everything and can now present our construction.

\paragraph*{\textbf{Construction T2A (trees to automata)}.}
For each respectful tree $\Gamma$ with $n$ leaves and each its faithful interval marking $\mu$, we construct an automaton denoted by $\mathrsfs{A}_\mu(\Gamma)$. The states of $\mathrsfs{A}_\mu(\Gamma)$ are the elements of the interval $r\mu$, where $r$ stands for the root of $\Gamma$, and the input alphabet of $\mathrsfs{A}_\mu(\Gamma)$ consists of $2n-2$ letters $a_v$, one for each non-root vertex $v$ of $\Gamma$. To define the action of the letters, we proceed by induction on $n$. For $n=1$, that is, for the trivial tree $\Gamma$ with one vertex $r$ and no edges, $\mathrsfs{A}_\mu(\Gamma)$ is the trivial automaton with one state and no transitions, so that nothing has to be defined.

Now suppose that $n>1$. Take any non-root vertex $v$ of $\Gamma$; we have to define the action of the letter $a_v$ on the elements of the interval $r\mu$. If $s$ and $d$ are respectively the son and the daughter of $r$, the interval $r\mu$ is the disjoint union of $s\mu$ and $d\mu$. If $v\ne s$ and $v\ne d$, then $v$ is a non-root vertex in one of the subtrees $\Gamma_s$ or $\Gamma_d$. These two cases are symmetric, so that we may assume that $v$ belongs to $\Gamma_s$. By the induction assumption applied to $\Gamma_s$ and its marking induced by $\mu$, the action of $a_v$ is already defined on the states from the interval $s\mu$; we extend this action to the whole interval $r\mu$ by setting $y\dt a_v:=y$ for each $y\in d\mu$.

It remains to define the action of the letters $a_s$ and $a_d$. Again, by symmetry, it suffices to handle one of these cases, so that we define the action of $a_s$. First we define how $a_s$ acts on the interval $s\mu$. If $s$ has no nephew in $\Gamma$, then $d$ is a leaf and $d\mu=[y]$ for some $y\in\mathbb{N}$. Then we let $x\dt a_s:=y$ for each $x\in r\mu$. Otherwise let $t$ be the nephew of $s$. The subordination condition (S1) implies that there exists a 1-1 homomorphism $\xi\colon\Gamma_t\to\Gamma_s$. It is easy to see that the intervals $(\ell\xi)\mu$, where $\ell$ runs over the set of all leaves of the tree $\Gamma_t$, form a partition of the interval $s\mu$. Now we define the action of $a_s$ on $s\mu$ as follows: if a number $x\in s\mu$ belongs to $(\ell\xi)\mu$ for some leaf $\ell$ of $\Gamma_t$ and $\ell\mu=[y]$ for some $y\in\mathbb{N}$, we let $x\dt a_s:=y$.

By the induction assumption applied to the subtree $\Gamma_d$ and its marking induced by $\mu$, the action of the letter $a_t$ is already defined on the states from the interval $d\mu$; now we extend the action of $a_s$ to $d\mu$ by setting $y\dt a_s:=y\dt a_t$ for all $y\in d\mu$. This completes our construction.

\smallskip

The reader may find it instructive to work out Construction T2A on a concrete example. For the tree from Fig.~\ref{fig:respectful tree} and \ref{fig:marking} used for illustrations above, computing all 12 input letters of the corresponding automaton would be rather cumbersome but one can check, for instance, that the letters $a_s$ and $a_d$ act on the set $[1,7]$ as follows:
\[
a_s=\begin{pmatrix}
1&2&3&4&5&6&7\\
6&6&6&6&6&6&7
\end{pmatrix}
\quad
a_d=\begin{pmatrix}
1&2&3&4&5&6&7\\
1&1&1&2&3&4&5
\end{pmatrix}.
\]
Those who prefer a complete example can look at the DFA $\mathrsfs{E}_3$ from Example~\ref{examp:not necessary}:
the automaton was in fact derived by Construction T2A from the respectful tree with 3 leaves shown in Fig.~\ref{fig:t3}. In particular, this explains our choice of notation for the input letters of $\mathrsfs{E}_3$ that perhaps had slightly puzzled the reader when she or he encountered this automaton in Section~\ref{sec:sufficient}.
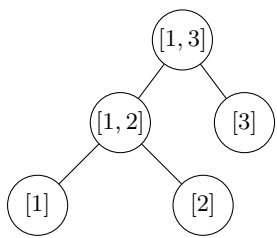
\begin{figure}[t]
\begin{center}
\unitlength 0.55mm
\begin{picture}(80,45)(0,35)
\gasset{AHnb=0}
\node(5)(20,50){$[1]$}
\node(8)(60,50){$[2]$}
\node(9)(40,70){$[1,2]$}
  \drawedge(5,9){}
  \drawedge(8,9){}
\node(10)(70,70){$[3]$}
\node(11)(55,90){$[1,3]$}
  \drawedge(9,11){}
  \drawedge(10,11){}
\end{picture}
\caption{The tree behind the automaton $\mathrsfs{E}_3$}\label{fig:t3}
\end{center}
\end{figure}
By the way, the flip-flop in Fig.~\ref{fig:flip-flop} also can be obtained by Construction~T2A (from the unique tree with 2 leaves).

Observe that all automata constructed from different markings of the same respectful tree are isomorphic since passing to another marking only results in a change of the state names. Taking this into account, we omit the reference to $\mu$ in the notation and denote the automaton derived from any marking of a given respectful tree $\Gamma$ simply by $\mathrsfs{A}(\Gamma)$.

We say that two DFAs $\mathrsfs{A}=\langle Q,\Sigma,\delta\rangle$ and $\mathrsfs{B}=\langle Q,\Delta,\zeta\rangle$ are \emph{syntactically equivalent} if their transition monoids coincide. Now we are ready for the main result of this section.

\begin{theorem}
\label{thm:minimal}
\textup1. For each respectful tree $\Gamma$, the automaton $\mathrsfs{A}(\Gamma)$ is a minimal \cran.

\textup2. Every minimal \cran\ is syntactically equivalent to an automaton of the form $\mathrsfs{A}(\Gamma)$ for a suitable respectful tree $\Gamma$.

\textup3. Every minimal \cran\ with $n$ states has at least $2n-2$ input letters.
\end{theorem}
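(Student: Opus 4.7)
My plan attacks the three parts in order, regarding parts~1 and~2 as translations of the transformation-monoid results of~\cite{Bondar:2014,Bondar:2016} and part~3 as the novel contribution. For part~1 I would induct on the number of leaves $n$: the inductive hypothesis gives that $\mathrsfs{A}(\Gamma_s)$ and $\mathrsfs{A}(\Gamma_d)$ are minimal \cra, and the bridging letters $a_s, a_d$ --- whose well-definedness is guaranteed by the subordination conditions (S1), (S2) --- extend complete reachability to all non-empty subsets of $r\mu$; a parallel bookkeeping shows that each such subset is realized as the image of a unique transformation, giving syntactic complexity exactly $2^n-1$. For part~2 I would reverse-engineer: starting from a minimal \cran\ $\mathrsfs{A}$, I would extract from the lattice of images in $M(\mathrsfs{A})$ a linear order on $Q$ and a laminar family of distinguished subsets that assembles into a binary tree forced to satisfy (S1), (S2) by the need to realize the remaining transformations with prescribed kernels.

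For part~3, given parts~1 and~2 it suffices to show that for every respectful tree $\Gamma$ with $n$ leaves, each of the $2n-2$ letter transformations $a_v$ is \emph{essential} in $M(\mathrsfs{A}(\Gamma))$, i.e., lies in every generating set. When $v$ is a leaf the argument is short. Since every letter of $\mathrsfs{A}(\Gamma)$ has positive defect, a straightforward induction on word length shows that every non-identity element of $M(\mathrsfs{A}(\Gamma))$ has positive defect, so the only permutation in the monoid is the identity. Hence in any factorization $a_v=x_1\cdots x_k$ into non-identity factors distinct from $a_v$, the bound $\mathrm{def}(a_v)=1$ forces every $x_i$ to have defect exactly~$1$. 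Iterating the observation from Section~\ref{sec:sufficient} that the excluded state of a product of two defect-$1$ transformations coincides with that of the right factor yields $\excl(x_k)=\excl(a_v)=v\mu$; by the uniqueness of defect-$1$ elements per excluded state in a minimal \cran, this forces $x_k=a_v$, contradicting the hypothesis that $a_v$ lies outside the generating set.

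For $v$ an internal non-root vertex the argument is more delicate and constitutes the main obstacle. The plan is to induct on the tree, splitting into the cases $v=s$, $v=d$, and $v$ a proper descendant of one of them. The guiding observation is that, when $v$ is a proper descendant of $s$, the transformation $a_v$ acts as the identity on $d\mu$ pointwise while its image $r\mu\setminus v\mu$ contains the whole of $d\mu$; tracking the intersection of intermediate images with $d\mu$ along a putative factorization $a_v=x_1\cdots x_k$ severely restricts the last letter $x_k$. For the base case $v=s$, one shows that any valid prefix $x_1\cdots x_{k-1}$ must itself have image $d\mu$ and hence coincide with $a_s$ as a transformation, giving an infinite-descent contradiction. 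For $v$ a proper descendant, the inductive hypothesis applied to $M(\mathrsfs{A}(\Gamma_s))$ closes the case in which the factorization stays within letters of the subtree; ruling out factorizations that genuinely use the bridging letters $a_s, a_d$ or letters from $\Gamma_d$ is the delicate step, and it requires a refined invariant that tracks the simultaneous restrictions of intermediate products to both $s\mu$ and $d\mu$. This last reduction is where I expect the main technical difficulty to lie.
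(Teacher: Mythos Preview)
Your treatment of parts~1 and~2 matches the paper exactly: the paper does not prove them in the text but states that they are essentially equivalent to the transformation-monoid results of~\cite{Bondar:2014,Bondar:2016}, which is precisely the translation you propose to carry out.

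For part~3 the situation is more interesting. The paper includes no proof at all; it says that the only argument the authors possess ``requires reproducing several concepts and results from~\cite{Bondar:2014,Bondar:2016}'' and adds that ``it is very tempting to invent a direct proof of this claim that would bypass rather bulky considerations''. Your plan is exactly such a direct attempt---show that each $a_v$ is an irreducible element of $M(\mathrsfs{A}(\Gamma))$, hence lies in every generating set---so your route is genuinely different from the paper's (unshown) route and of the kind the authors explicitly wish for.

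Your leaf case is correct. The monoid indeed has no nontrivial permutation (every letter has positive defect, hence so does every non-empty product); a product of defect-1 transformations that itself has defect~1 has all factors of defect~1 and inherits its excluded state from the rightmost factor; and minimality forces a unique defect-1 element per excluded state. Together these give $x_k=a_v$, a contradiction.

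The internal-vertex case is, as you acknowledge, where the real work lies. The structural facts you invoke are correct: $a_v$ has image $r\mu\setminus v\mu$, it fixes $d\mu$ pointwise when $v$ lies in $\Gamma_s$, and $a_s$ has image exactly $d\mu$. These do constrain the last factor $x_k$ of a putative factorization. But the step ``ruling out factorizations that genuinely use the bridging letters $a_s,a_d$ or letters from $\Gamma_d$'' is at present only a hope, not an argument: your ``refined invariant that tracks the simultaneous restrictions to $s\mu$ and $d\mu$'' has to be stated precisely and shown to be preserved along the factorization before the induction closes. Until that invariant is on paper and verified, this is the gap. The paper's (omitted) approach sidesteps this combinatorics by importing the structural theory of $\mathcal{L}$-cross-sections from~\cite{Bondar:2014,Bondar:2016}; your approach, if it can be completed, would be more self-contained, but the hard step is still ahead of you.
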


Claims 1 and 2 in Theorem~\ref{thm:minimal} are essentially equivalent to the main results of the papers~\cite{Bondar:2014,Bondar:2016} by the first author who has used a slightly different construction expressed in the language of transformation monoids: given a marking of a  respectful tree $\Gamma$ she constructs the transition monoid of $\mathrsfs{A}(\Gamma)$ rather than the automaton itself. Claim~3 is new but we have not included its proof here due to the space limitations because the only proof we have at the moment requires reproducing several concepts and results from~\cite{Bondar:2014,Bondar:2016} and restating them in the language adopted in the present paper. It is very tempting to invent a direct proof of this claim that would bypass rather bulky considerations from~\cite{Bondar:2014,Bondar:2016}.

Theorem~\ref{thm:minimal} leaves widely open the question about lower bounds for syntactic complexity of \cra\ with restricted alphabet. In particular, the case of \cra\ with 2 input letters both is of interest and seems to be tractable. The latter conclusion follows from our analysis of \cra\ with 2 input letters at the end of Section~\ref{sec:complexity} which demonstrates that such DFAs have rather a specific structure.

We say that a DFA $\mathrsfs{A}=\langle Q,\Sigma,\delta\rangle$ \emph{induces} a DFA $\mathrsfs{B}=\langle Q,\Delta,\zeta\rangle$ on the same state set if the transition monoid of $\mathrsfs{A}$ contains that of  $\mathrsfs{B}$. Equivalently, this means that for every letter $b\in\Delta$, there exists a word $w\in\Sigma^*$ such that $\zeta(q,b)=\delta(q,w)$ for every $q\in Q$. This relation between automata plays an essential role in the theory of \sa, see, e.g., \cite{Ananichev&Gusev&Volkov:2013}. With respect to \cra, the following question is of interest: is it true that every \cran\ induces a minimal \cran? In other words, is it true that an automaton of the form  $\mathrsfs{A}(\Gamma)$ `hides' within every \cran?

\section{More Open Questions}
\label{sec:final}

Since \cra\ are synchronizing, it is natural to ask what is the maximum \rt\ for \cra\ with $n$ states. In view of Example~\ref{examp:cerny}, the lower bound $(n-1)^2$ for this maximum is provided by the \v{C}ern\'{y} automata $\mathrsfs{C}_n$. For \cra\ with 2 input letters this bound is tight because, except for the flip-flop, such automata have a letter that acts as a cyclic permutation of the state set, and therefore, Dubuc's result \cite{Du98} applies to them. Some partial results about synchronization of \cra\ can be found in~\cite{Don:2015}, but the general problem of finding the maximum \rt\ for \cra\ with $n$ states and unrestricted alphabet remains open.

The problem discussed in the previous paragraph basically asks what is the minimum length of a word that reaches a singleton. For \cra, a similar question makes sense for an arbitrary non-empty subset. Thus, we suggest to investigate the minimum length of a word that reaches a subset with $m$ element in a \cran\ with $n$ states as a function of $n$ and $m$. Don~\cite[Conjecture~2]{Don:2015} has formulated a very strong conjecture that implies the upper bound $n(n-m)$ on this length. Observe that if this upper bound indeed holds, then \cra\ satisfy the \v{C}ern\'{y} conjecture. To see this, take a \cran\ $\mathrsfs{A}=\langle Q,\Sigma\rangle$ with $n$ states; it should possess a letter $a\in\Sigma$ such that $q\dt a=q'\dt a$ for two different states $q,q'\in Q$. If a word $w\in\Sigma^*$ of length at most $n(n-2)$ is such that $Q\dt w=\{q,q'\}$, the word $wa$ is a \sw\ for $\mathrsfs{A}$ and has length at most $n(n-2)+1=(n-1)^2$.

Another intriguing problem about \cra\ suggested by the theory of \sa\ is a variant of the Road Coloring Problem.  We recall notions involved there. A \emph{road coloring} of a finite graph $\Gamma$ consists in assigning non-empty sets of labels (colors) from some alphabet $\Sigma$ to edges of $\Gamma$ such that the label sets assigned to the outgoing edges of each vertex form a partition of $\Sigma$. Colored this way, $\Gamma$ becomes a DFA over $\Sigma$; every such DFA is called a \emph{coloring} of $\Gamma$. Fig.\,\ref{fig:coloring} shows a graph and two of its colorings by $\Sigma=\{a,b\}$, one of which is the \v{C}ern\'{y} automaton $\mathrsfs{C}_4$.
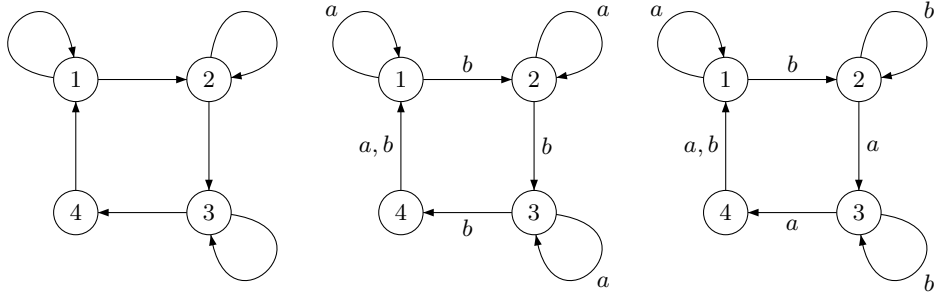
\begin{figure}[t]
 \begin{center}
  \unitlength=2.8pt
    \begin{picture}(18,30)(-63,-4)
    \gasset{Nw=6,Nh=6,Nmr=3}
    \node(A1)(0,18){$1$}
    \node(A2)(18,18){$2$}
    \node(A3)(18,0){$3$}
    \node(A4)(0,0){$4$}
    \drawloop[loopangle=135](A1){$a$}
    \drawloop[loopangle=45](A2){$b$}
    \drawloop[loopangle=-45](A3){$b$}
    \drawedge(A1,A2){$b$}
    \drawedge(A2,A3){$a$}
    \drawedge(A3,A4){$a$}
    \drawedge(A4,A1){$a,b$}
    \end{picture}
 \begin{picture}(18,30)(0,-4)
    \gasset{Nw=6,Nh=6,Nmr=3}
    \node(A1)(0,18){$1$}
    \node(A2)(18,18){$2$}
    \node(A3)(18,0){$3$}
    \node(A4)(0,0){$4$}
    \drawloop[loopangle=135](A1){$a$}
    \drawloop[loopangle=45](A2){$a$}
    \drawloop[loopangle=-45](A3){$a$}
    \drawedge(A1,A2){$b$}
    \drawedge(A2,A3){$b$}
    \drawedge(A3,A4){$b$}
    \drawedge(A4,A1){$a,b$}
    \end{picture}
 \begin{picture}(18,30)(63,-4)
    \gasset{Nw=6,Nh=6,Nmr=3}
    \node(A1)(0,18){$1$}
    \node(A2)(18,18){$2$}
    \node(A3)(18,0){$3$}
    \node(A4)(0,0){$4$}
    \drawloop[loopangle=135](A1){}
    \drawloop[loopangle=45](A2){}
    \drawloop[loopangle=-45](A3){}
    \drawedge(A1,A2){}
    \drawedge(A2,A3){}
    \drawedge(A3,A4){}
    \drawedge(A4,A1){}
    \end{picture}
 \end{center}
 \caption{A graph and two of its colorings}
 \label{fig:coloring}
\end{figure}
The Road Coloring Problem, recently solved by Trahtman~\cite{Tr09}, had asked which \scn\ graphs admit \emph{synchronizing colorings}, i.e.,  colorings that are \sa. It turns out that, as it was conjectured in~\cite{AGW}, the necessary and sufficient condition for a \scn\ graph to possess a synchronizing coloring is that the greatest common divisor of lengths of all directed cycles in the graph should be equal to~1. The latter property is called \emph{aperiodicity} or \emph{primitivity}.

\begin{figure}[hb]
\begin{center}
\unitlength=1.05mm
\begin{picture}(105,25)(-5,-5)
\node(A1)(0,0){1} \node(B1)(20,0){2} \node(C1)(10,17){3}
\drawedge[curvedepth=-3](A1,B1){}
\drawedge[curvedepth=-3](B1,C1){}
\drawedge[curvedepth=-3](C1,A1){}
\drawloop[loopangle=180](A1){}
\node(A2)(42,0){1} \node(B2)(59,0){2} \node(C2)(42,17){4} \node(D2)(59,17){3}
\drawedge[curvedepth=-3](A2,B2){}
\drawedge(B2,A2){}
\drawedge(C2,A2){}
\drawedge(B2,D2){}
\drawedge(D2,C2){}
\drawloop[loopangle=180](A2){}
\node(A3)(82,0){1} \node(B3)(99,0){2} \node(C3)(82,17){4} \node(D3)(99,17){3}
\drawedge[curvedepth=-3,ELside=r](A3,B3){$b,c$}
\drawedge[ELside=r](B3,A3){$c$}
\drawedge[ELside=r](C3,A3){$a,b,c$}
\drawedge[ELside=r](B3,D3){$a,b$}
\drawedge[ELside=r](D3,C3){$a,b,c$}
\drawloop[loopangle=180](A3){$a$}
\end{picture}
\caption{The left graph has no completely reachable coloring; the central graph has no completely reachable coloring with 2 letters but has a completely reachable coloring with 3 letters shown in the right}\label{fig:no cra coloring}
\end{center}
\end{figure}
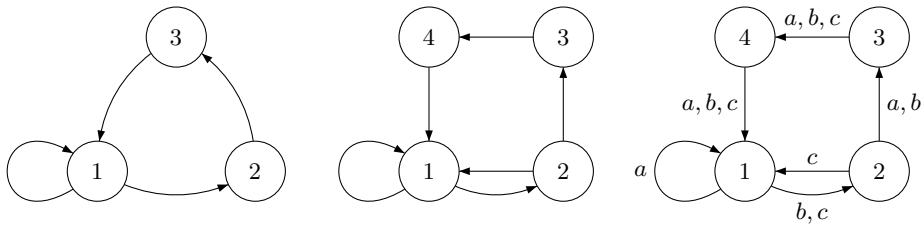
An analogous question makes sense for \cra. Namely, call a coloring of a graph \emph{completely reachable} if it yields a \cran. Our problem then consists in characterising graphs that admit completely reachable colorings. Such graphs must be \scn\ and primitive since every \cran\ is \scn\ and synchronizing. However, it is easy to produce an example of a \scn\ primitive graph that has no completely reachable coloring; such a graph is shown in Fig.~\ref{fig:no cra coloring} on the left. Moreover, there are interesting phenomena that have no parallel in the theory of \sa; for instance, there exist graphs that have no completely reachable coloring with 2 letters but admit such a coloring with 3 letters; an example of such a graph is presented in the center of Fig.~\ref{fig:no cra coloring} while the corresponding coloring is shown on the right.

\paragraph*{Acknowledgement.} The authors are grateful to Vladimir Gusev and Elena Pribavkina for a number of useful suggestions.


\begin{thebibliography}{99}
\bibitem{AGW}
Adler, R.L., Goodwyn, L.W., Weiss, B.: Equivalence of topological Markov shifts. Israel J. Math. 27, 49--63 (1977)

\bibitem{Ananichev&Gusev&Volkov:2013}
Ananichev, D.S., Gusev, V.V., Volkov, M.V.: Primitive digraphs with large exponents and slowly synchronizing automata. J. Math. Sci. 192(3), 263--278 (2013)

\bibitem{Bondar:2014}
Bondar, E.: $\mathcal{L}$-cross-sections of the finite symmetric semigroup. Algebra and Discrete Math. 18(1), 27--41 (2014)

\bibitem{Bondar:2016}
Bondar, E.: Classification of $\mathcal{L}$-cross-sections of $\mathcal{T}_n$. Algebra and Discrete Math. 21(1), 1--17 (2016)

\bibitem{Brandl&Simon:2015}
Brandl, Ch., Simon, H.U.: Complexity analysis: transformation monoids of finite automata. In: I. Potapov (ed.), Developments in Language Theory---19th Int. Conf., DLT 2015. Lect. Notes Comput. Sci., vol. 9168, pp. 143--154. Springer, Heidelberg (2015)

\bibitem{Brzozowski&Li:2014}
Brzozowski, J.A., Li, B.: Syntactic complexity of $\mathcal{R}$ and $\mathcal{J}$-trivial regular languages. Int. J. Found. Comput. Sci. 25(7): 807--822 (2014)

\bibitem{Brzozowski&Szykula:2015}
Brzozowski, J.A., Szyku\l{}a, M.: Upper bound on syntactic complexity of suffix-free languages. In:  J. Shallit, A. Okhotin (eds.), Descriptional Complexity of Formal Systems---17th Int. Workshop, {DCFS} 2015. Lect. Notes Comput. Sci., vol. 9118, pp. 33--45. Springer, Heidelberg (2015)

\bibitem{Ce64}
\v{C}ern\'{y}, J.: Pozn\'{a}mka k homog\'{e}nnym eksperimentom s kone\v{c}n\'{y}mi automatami. Matematicko-fyzikalny \v{C}asopis Slovensk.\ Akad.\ Vied 14(3), 208--216 (1964) (in Slovak)

\bibitem{Don:2015}
Don, H.: The \v{C}ern\'{y} conjecture and 1-contracting automata. CoRR, abs/1507.06070 (2015) \url{http://arxiv.org/abs/1507.06070}

\bibitem{Du98}
Dubuc, L.: Sur les automates circulaires et la conjecture de \v{C}ern\'y. RAIRO Inform.\ Th\'eor.\ Appl. 32(1-3), 21--34 (1998) (in
French)

\bibitem{Furst&Hopcroft&Luks:1980}
Furst, M.L., Hopcroft, J.E., Luks, E.M.: Polynomial-time algorithms for permutation groups. In: 21st Annual Symp. on Foundations of Comput. Sci., pp. 36--41. IEEE Computer Society, Washington (1980)

\bibitem{Ganyushkin&Mazorchuk:2009}
Ganyushkin, O., Mazorchuk, V. Classical Finite Transformation Semigroups: An Introduction. Springer, Heidelberg (2009)

\bibitem{Goralcik&Koubek:1995}
Goral\v{c}\'{\i}k, P., Koubek, V.: Rank problems for composite transformations. Int. J. Algebra Comput. 5(3), 309--316 (1995)

\bibitem{Kozen:1977}
Kozen, D.: Lower bounds for natural proof systems. In: 18th Annual Symp. on Foundations of Comput. Sci., pp. 254--266. IEEE Computer Society, Washington (1977)

\bibitem{Martyugin:2014}
Martyugin, P.V.: Computational complexity of certain problems related to carefully synchronizing words for partial automata and directing words for nondeterministic automata. Theory Comput. Systems 57(2), 293--304 (2014)

\bibitem{Maslennikova:2012}
Maslennikova, M.I.: Reset complexity of ideal languages. In: M. Bielikov\'{a}, G. Friedrich, G. Gottlob, S. Katzenbeisser, R. \v{S}p\'{a}nek, G. Tur\'{a}n (eds.),
38th Int. Conf. on Current Trends in Theory and Practice of Comput. Sci., SOFSEM 2012. Vol. II, pp. 33--44. Inst. Comp. Sci. Acad. Sci. Czech Republic, Prague (2012)

\bibitem{Maslennikova:2014}
Maslennikova, M.I.: Reset complexity of ideal languages. CoRR, abs/1404.2816 (2014) \url{http://arxiv.org/abs/1404.2816}

\bibitem{Pribavkina&Rodaro:2011}
Pribavkina, E.V., Rodaro, E.: Synchronizing automata with finitely many minimal synchronizing words. Information and Computation 209, 568--579 (2011)  

\bibitem{Tr09}
Trahtman, A.N.: The road coloring problem. Israel J. Math. 172, 51--60 (2009)

\bibitem{Vo08}
Volkov, M.V.: Synchronizing automata and the \v{C}ern\'{y} conjecture. In: C. Mar\-t\'\i{}n-Vide, F. Otto, H. Fernau (eds.), Languages and Automata Theory and Applications---2nd Int. Conf., LATA 2008. Lect. Notes Comput. Sci., vol. 5196, pp. 11--27.  Springer, Heidelberg (2008)
\end{thebibliography}
\end{document}